\date{June 29, 2007}
\newtheorem{thm}{Theorem}[section]
\newtheorem{lemma}[thm]{Lemma}
\newtheorem{theorem}{Theorem}[section]
\newtheorem{corollary}[theorem]{Corollary}
\newenvironment{proof}{{\bf Proof:}}{\hfill$\square$\vskip.5cm}
\newcommand{\E}{\mathbb{E}}
\newcommand{\Z}{\mathbb{Z}}
\newcommand{\s}{\sigma}
\renewcommand{\O}{\Omega}
\title{Thermodynamic Limit for Spin Glasses.\\ Beyond the Annealed Bound}
\author{$\text{\rm Pierluigi Contucci}^1$ and 
$\text{\rm Shannon Starr}^2$\\[5pt]
\normalsize $\textcolor{white}{\text{\rm P}}^1$ Departimento di Matematica\\
\normalsize Universit\`a di Bologna, 40127 Bologna, Italy\\
\normalsize {\it email:} \texttt{contucci {\it at} dm  {\it dot} unibo {\it dot} it}\\[5pt]
\normalsize
$\textcolor{white}{\text{\rm S}}^2$ Department of Mathematics, University of Rochester\\
\normalsize RC Box 270138,
Rochester, NY 14627\\  
\normalsize {\it email:} \texttt{sstarr {\it at} math {\it dot} rochester {\it dot} edu}}
\date{14 October 2008}
\begin{document}

\onehalfspacing

\markright{}

\maketitle

\begin{abstract}
Using a correlation inequality of Contucci and Lebowitz for spin glasses, we demonstrate
existence of the thermodynamic limit for short-ranged spin glasses, under weaker hypotheses
than previously available, namely without the assumption of the annealed bound.

\vspace{8pt}
\noindent
{\small \bf Keywords:} Spin glass, correlation inequality, Griffiths inequality, super-additivity, thermodynamic limit, quenched pressure.
\vskip .2 cm
\noindent
{\small \bf MCS numbers:} 82B44; 60K35
\end{abstract}

\section{Introduction}

With Sandro Graffi, one of the authors proved Griffiths-type inequalities for Gaussian spin glasses in \cite{ContucciGraffi}.
These inequalities were extended to all possible spin glasses by a simple argument by Joel Lebowitz and one of the authors
in \cite{ContucciLebowitz}:
\begin{theorem}[Contucci and Lebowitz 2006]
\label{thm:CL}
Consider a spin glass model with Hamiltonian
$$
-H_{\Lambda}(\sigma,\boldsymbol{J})\,
=\, \sum_{X \subseteq \Lambda} \lambda_{X} J_{X} \sigma_X\, ,
$$
where all the $J_X$ are random and satisfy $\E[J_X]=0$, $\E[J_X^2] = 1$, and all the $\lambda_X$ are nonnegative.
Here $\sigma \in \{+1,-1\}^{\Lambda} =: \Omega_{\Lambda}$ is the spin configuration and $\sigma_X = \prod_{i \in X} \sigma_i$.
Then, for $P_{\Lambda} = \E\left[\ln \sum_{\s \in \Omega_{\Lambda}} e^{-\beta H_{\Lambda}(\sigma,\boldsymbol{J})}\right]$,
\begin{equation}
\label{ineq:CL}
\frac{\partial P_{\Lambda}}{\partial \lambda_X}\, \geq\, 0\, ,
\end{equation}
for all $X\subseteq \Lambda$.
\end{theorem}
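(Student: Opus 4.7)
The plan is to differentiate $P_\Lambda$ in $\lambda_X$ explicitly, show that the resulting Gibbs expectation is monotone in $J_X$ with the other couplings held fixed, and then close with a covariance inequality that uses $\E[J_X] = 0$ together with the (implicit) independence of the disorder variables. For the first step, since $|\sigma_X| = 1$ and $\lambda_X$ enters linearly, I would differentiate under the expectation to obtain
$$\frac{\partial P_\Lambda}{\partial \lambda_X}\, =\, \beta\, \E\bigl[J_X\, \langle \sigma_X \rangle_\Lambda\bigr],$$
where $\langle \cdot \rangle_\Lambda$ is the Gibbs average at inverse temperature $\beta$ in the realized disorder.

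The second step is to fix all $J_Y$ with $Y \neq X$ and view $\langle \sigma_X \rangle_\Lambda$ as a function of $J_X$ alone. A direct Gibbs-derivative computation gives
$$\frac{\partial \langle \sigma_X \rangle_\Lambda}{\partial J_X}\, =\, \beta \lambda_X \bigl(\langle \sigma_X^2 \rangle_\Lambda - \langle \sigma_X \rangle_\Lambda^2\bigr)\, =\, \beta \lambda_X \bigl(1 - \langle \sigma_X \rangle_\Lambda^2\bigr)\, \geq\, 0,$$
using $\sigma_X^2 \equiv 1$ and the hypothesis $\lambda_X \geq 0$. Thus $\langle \sigma_X \rangle_\Lambda$ is nondecreasing in $J_X$. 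This is the step where both the Ising nature of the spins and the sign assumption on $\lambda_X$ are used.

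To finish, I would invoke the following elementary covariance identity: for any random variable $U$ with $\E[U] = 0$, any nondecreasing $f$, and an independent copy $U'$ of $U$,
$$\E[U f(U)]\, =\, \tfrac{1}{2}\, \E\bigl[(U - U')(f(U) - f(U'))\bigr]\, \geq\, 0.$$
Conditioning on $(J_Y)_{Y \neq X}$ (independent of $J_X$) and applying this with $U = J_X$ and $f(\cdot) = \langle \sigma_X \rangle_\Lambda$, whose monotonicity in $J_X$ was established above, yields $\E[J_X \langle \sigma_X \rangle_\Lambda] \geq 0$, which is \eqref{ineq:CL}. The key conceptual point, and the place where this argument improves on the Gaussian case of \cite{ContucciGraffi}, is this covariance step: for Gaussian disorder one can simply integrate by parts and recognize a nonnegative variance-like quantity, but for arbitrary centered disorder one must instead route through the monotonicity of $\langle \sigma_X \rangle_\Lambda$ in $J_X$ combined with the symmetrization identity above. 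Integrability is not an obstacle, since $|\langle \sigma_X \rangle_\Lambda| \leq 1$ and $\E|J_X| \leq 1$ by Cauchy--Schwarz.
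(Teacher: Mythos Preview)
The paper does not actually prove Theorem~\ref{thm:CL}; it is quoted from \cite{ContucciLebowitz} as a known input, so there is no in-paper proof to compare against. That said, your argument is correct and is precisely the ``simple argument'' of Contucci and Lebowitz alluded to in the introduction: one computes $\partial_{\lambda_X} P_\Lambda = \beta\,\E[J_X\langle\sigma_X\rangle]$, notes that $J_X\mapsto\langle\sigma_X\rangle$ is nondecreasing (conditionally on the remaining couplings) since its derivative is $\beta\lambda_X(1-\langle\sigma_X\rangle^2)\geq 0$, and then invokes the one-variable correlation inequality $\operatorname{Cov}(f(U),g(U))\geq 0$ for nondecreasing $f,g$---your symmetrization identity being one standard proof of it. You are also right to flag that independence of the couplings $(J_X)_X$, while not stated explicitly in the theorem, is needed for the conditioning step and is the standing assumption throughout the paper.
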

This inequality is a perfect analogue of the first Griffiths inequality. (See, for example, \cite{Ruelle}.)
\begin{theorem}[Griffiths 1967, Kelly and Sherman 1968]
\label{thm:GKS}
Consider the Hamiltonian
$$
-H_{\Lambda}(\sigma)\,
=\, \sum_{X \subseteq \Lambda} J_{X} \sigma_X\, ,
$$
where all the $J_X$ are nonnegative.
Then, for $P_{\Lambda} = \sum_{\s \in \Omega_{\Lambda}} e^{-\beta H_{\Lambda}(\s)}$,
$$
\frac{\partial P_{\Lambda}}{\partial J_X}\, \geq\, 0\, ,
$$
for all $X\subseteq \Lambda$.
\end{theorem}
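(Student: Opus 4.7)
The plan is to reduce the claim to the classical first Griffiths inequality $\langle \sigma_X \rangle \geq 0$ and then establish it via the standard duplicate-replica trick. Direct differentiation gives
$$
\frac{\partial P_\Lambda}{\partial J_X}\, =\, \beta \sum_{\sigma} \sigma_X e^{-\beta H_\Lambda(\sigma)}\, ,
$$
so (taking $\beta > 0$) the claim is equivalent to $\sum_\sigma \sigma_X e^{-\beta H_\Lambda(\sigma)} \geq 0$. Multiplying by the positive quantity $P_\Lambda$ and introducing an independent replica $\tau$, this is in turn equivalent to
$$
\sum_{\sigma, \tau \in \Omega_\Lambda} \sigma_X \exp\!\Bigl(\beta \sum_{Y \subseteq \Lambda} J_Y (\sigma_Y + \tau_Y)\Bigr)\, \geq\, 0\, .
$$

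Next I would expand the exponential as a power series in the couplings,
$$
\exp\!\Bigl(\beta \sum_Y J_Y (\sigma_Y + \tau_Y)\Bigr)\, =\, \prod_Y \sum_{n_Y \geq 0} \frac{(\beta J_Y)^{n_Y}}{n_Y!}\, (\sigma_Y + \tau_Y)^{n_Y}\, .
$$
Because each $J_Y \geq 0$, every coefficient in this expansion is nonnegative. Hence it suffices to show that for every family of nonnegative integers $\{n_Y\}_{Y \subseteq \Lambda}$,
$$
S(X, \{n_Y\})\, :=\, \sum_{\sigma, \tau} \sigma_X \prod_Y (\sigma_Y + \tau_Y)^{n_Y}\, \geq\, 0\, .
$$

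For this I would perform the change of variables $\eta_i := \sigma_i \tau_i$, with $\tau$ held fixed; as $\sigma$ ranges over $\Omega_\Lambda$, so does $\eta$. Since $\sigma_Y + \tau_Y = \tau_Y(\eta_Y + 1)$ and $\sigma_X = \eta_X \tau_X$, the sum factorizes as
$$
S(X, \{n_Y\})\, =\, \Bigl[\sum_\tau \tau_X \prod_Y \tau_Y^{n_Y}\Bigr] \cdot \Bigl[\sum_\eta \eta_X \prod_Y (\eta_Y + 1)^{n_Y}\Bigr]\, .
$$
By character orthogonality on $\{\pm 1\}^\Lambda$, the first bracket equals either $2^{|\Lambda|}$ or $0$ according to whether every $\tau_i$ appears to an even total power. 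In the second bracket, $\prod_Y (\eta_Y+1)^{n_Y}$ is nonnegative everywhere and vanishes outside the subgroup $G \subseteq \Omega_\Lambda$ on which $\eta_Y = 1$ for each $Y$ with $n_Y \geq 1$; the residual sum $\sum_{\eta \in G} \eta_X$ is again either $|G|$ or $0$ by character orthogonality. Both brackets being nonnegative, so is their product.

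The main obstacle, conceptually, is spotting the right change of variables $\eta = \sigma \tau$; without it the alternating signs produced by $\tau_Y^{n_Y}$ obscure the nonnegativity. Once the factorization is in place, only elementary character theory on the group $\{\pm 1\}^\Lambda$ is needed, and the proof is routine.
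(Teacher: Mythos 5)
Your argument is correct; each step checks out. Note, however, that the paper does not prove this theorem at all --- it is quoted as a classical result (Griffiths, Kelly--Sherman) with a pointer to Ruelle --- so there is no ``paper proof'' to match. Compared with the standard argument, your proof carries an unnecessary piece of machinery: the duplicate replica $\tau$ is the device one needs for the \emph{second} Griffiths inequality (positivity of the truncated correlation $\langle\sigma_X\sigma_Y\rangle-\langle\sigma_X\rangle\langle\sigma_Y\rangle$), whereas the statement here concerns only the unnormalized quantity $\sum_\sigma \sigma_X e^{-\beta H_\Lambda(\sigma)}$, for which a single-replica expansion already suffices:
\begin{equation*}
\sum_{\sigma}\sigma_X\, e^{\beta\sum_Y J_Y\sigma_Y}\,=\,\sum_{\{n_Y\}}\Bigl(\prod_Y\frac{(\beta J_Y)^{n_Y}}{n_Y!}\Bigr)\sum_\sigma \sigma_X\prod_Y\sigma_Y^{n_Y}\, ,
\end{equation*}
and each inner sum is a character sum equal to $2^{|\Lambda|}$ or $0$. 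Your change of variables $\eta=\sigma\tau$ and the factorization into two nonnegative brackets are both verified correctly (the first bracket by parity of the exponents $m_i=[i\in X]+\sum_{Y\ni i}n_Y$, the second by restricting to the subgroup where $\eta_Y=1$ for all $Y$ with $n_Y\geq 1$), so the extra generality does no harm --- it simply proves more than is needed. The only implicit assumptions you should state are $\beta\geq 0$ (so that the coefficients $(\beta J_Y)^{n_Y}/n_Y!$ are nonnegative) and the justification for interchanging the sum over $\{n_Y\}$ with the finite spin sums, which is immediate from absolute convergence of the exponential series.
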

In \cite{ContucciLebowitz}, Theorem \ref{thm:CL}
was used to prove that the thermodynamic limit of a general spin-glass model exists under the 
very general condition of Thermodynamic Stability,
\begin{equation}
\label{ineq:ContLebStab}
\E[e^{-\beta H_{\Lambda}(\sigma)}]\, \leq\, e^{c|\Lambda|}\, ,
\end{equation}
for some $c<\infty$ (and all $\sigma$ by symmetry).
In the case that all of the random couplings are Gaussian,
this condition reduces to the condition of a stable potential from \cite{ContucciGraffi},
\begin{equation}
\label{ineq:CG}
\sup_{\Lambda} \frac{1}{|\Lambda|} \sum_{X \subset \Lambda} \lambda_X^2 \E[J_X^2]\,
<\, \infty\, .
\end{equation}
When the 
couplings are such that exponential moments are finite, the condition (\ref{ineq:ContLebStab})
is optimal.
If the couplings have ``fat tails,'' however, then it may occur that $\E[e^{-\beta H_{\Lambda}(\sigma)}]=\infty$ which means that 
the {\em annealed} pressure does not exist.
But this does not preclude the existence of the thermodynamic limit for the 
{\em quenched} pressure, $N^{-1} P_N(\beta)$.
In this letter, we would like to consider another inequality that allow one to prove Theorem \ref{thm:CL} with weaker assumptions.

\section{Recursive Formula for the Pressure}

The inequalities of this section are motivated by similar inequalities for mean-field diluted
spin glasses which appear, for example, in \cite{StarrVermesi}.
Let us consider a general Ising Hamiltonian.
Suppose $\Lambda \subset \Z^d$ is a finite set, and suppose $N(\Lambda)$ is some integer
and $X_1,\dots,X_{N(\Lambda)}$ are subsets of $\Lambda$,
and $J_1,\dots,J_{N(\Lambda)}$ are reals.
Let us denote $\boldsymbol{X} = (X_1,\dots,X_{N(\Lambda)})$ and 
$\boldsymbol{J} = (J_1,\dots,J_{N(\Lambda)})$.
Then we define
$$
-H_{\Lambda}(\s,\boldsymbol{X},\boldsymbol{J})\, =\, \sum_{n=1}^{N(\Lambda)} J_{n} \s_{X_n}\, .
$$
Let us define the partition function and the pressure density as
$$
Z_{\Lambda}(\beta,\boldsymbol{X},\boldsymbol{J})\, 
=\, \sum_{\s \in \Omega_{\Lambda}} e^{- \beta H_{\Lambda}(\s,\boldsymbol{X},\boldsymbol{J})}\quad
\text{and}\quad
p_{\Lambda}(\beta,\boldsymbol{X},\boldsymbol{J})\, 
=\, \frac{1}{|\Lambda|}\, \ln Z_{\Lambda}(\beta,\boldsymbol{X},\boldsymbol{J})\, .
$$
Also, let us define the Boltzmann-Gibbs measure
$$
\langle f(\sigma) \rangle_{\Lambda,\beta,\boldsymbol{X},\boldsymbol{J}}\,
=\, \sum_{\sigma \in \Omega_{\Lambda}} f(\sigma)\, \frac{e^{-\beta H_{\Lambda}(\sigma,\boldsymbol{X},\boldsymbol{J})}}
{Z_{\Lambda}(\beta,\boldsymbol{X},\boldsymbol{J})}\, .
$$
For each $n\leq N(\Lambda)$, let us denote $\boldsymbol{X}_{[n]} = (X_1,\dots,X_n)$ and $\boldsymbol{J}_{[n]} = (J_1,\dots,J_n)$.

The main inequality that we need in order to use Theorem \ref{thm:CL} is the following one-sided bound.

\begin{lemma}
\label{lem:bound}
$$
p_{\Lambda}(\beta,\boldsymbol{X},\boldsymbol{J}) - p_{\Lambda}(\beta,\boldsymbol{X}_{[n]},\boldsymbol{J}_{[n]})\,
\leq\, \frac{1}{|\Lambda|} \sum_{k=n+1}^{N(\Lambda)} 
\left[\ln \cosh(\beta J_k)  + \tanh(\beta J_k) \langle \sigma_{X_k}\rangle_{\Lambda,\beta,\boldsymbol{X}_{[k-1]},\boldsymbol{J}_{[k-1]}}\right]\, .
$$
\end{lemma}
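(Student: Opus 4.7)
The plan is a straightforward telescoping argument, reducing the inequality to a single-step bound that gets iterated. The main observation is that adding a single coupling $J_k \sigma_{X_k}$ to a Hamiltonian produces a multiplicative change in the partition function that factors neatly, thanks to the identity $e^{\beta J \sigma_X} = \cosh(\beta J)\bigl(1 + \tanh(\beta J)\, \sigma_X\bigr)$ for $\sigma_X \in \{-1,+1\}$.

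Concretely, I would first show the single-step bound
$$
p_{\Lambda}(\beta,\boldsymbol{X}_{[k]},\boldsymbol{J}_{[k]}) - p_{\Lambda}(\beta,\boldsymbol{X}_{[k-1]},\boldsymbol{J}_{[k-1]})\, \leq\, \frac{1}{|\Lambda|}\Bigl[\ln\cosh(\beta J_k) + \tanh(\beta J_k)\,\langle \sigma_{X_k}\rangle_{\Lambda,\beta,\boldsymbol{X}_{[k-1]},\boldsymbol{J}_{[k-1]}}\Bigr]\, .
$$
The derivation: write $Z_{\Lambda}(\beta,\boldsymbol{X}_{[k]},\boldsymbol{J}_{[k]}) = \sum_\sigma e^{-\beta H_\Lambda(\sigma,\boldsymbol{X}_{[k-1]},\boldsymbol{J}_{[k-1]})}\, e^{\beta J_k \sigma_{X_k}}$, substitute the identity above, and factor out $\cosh(\beta J_k)$ to get
$$
Z_{\Lambda}(\beta,\boldsymbol{X}_{[k]},\boldsymbol{J}_{[k]})\, =\, \cosh(\beta J_k)\, Z_{\Lambda}(\beta,\boldsymbol{X}_{[k-1]},\boldsymbol{J}_{[k-1]})\, \bigl(1 + \tanh(\beta J_k)\,\langle \sigma_{X_k}\rangle_{\Lambda,\beta,\boldsymbol{X}_{[k-1]},\boldsymbol{J}_{[k-1]}}\bigr)\, .
$$
Taking logs and dividing by $|\Lambda|$ gives the exact recursion, and then applying the elementary inequality $\ln(1+x) \leq x$ (valid since $|\tanh(\beta J_k)\,\langle \sigma_{X_k}\rangle| < 1$, so the argument of the log is strictly positive) yields the one-step bound.

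The final step is just to telescope: summing the single-step bound over $k = n+1,\dots,N(\Lambda)$ causes the left-hand side to collapse to $p_{\Lambda}(\beta,\boldsymbol{X},\boldsymbol{J}) - p_{\Lambda}(\beta,\boldsymbol{X}_{[n]},\boldsymbol{J}_{[n]})$, while the right-hand side is precisely the claimed sum. There is no real obstacle here; the only mildly delicate point is noting that the Boltzmann average in the bound is taken with respect to the \emph{partial} Hamiltonian built from the first $k-1$ couplings, which is exactly what drops out of the factorization at step $k$. This asymmetry (one-sided rather than two-sided) is the reason we only get an inequality: the loss $\tanh(\beta J_k)\langle\sigma_{X_k}\rangle - \ln(1 + \tanh(\beta J_k)\langle\sigma_{X_k}\rangle) \geq 0$ is precisely the slack in $\ln(1+x)\leq x$.
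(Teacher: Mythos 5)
Your proposal is correct and matches the paper's argument essentially step for step: the same factorization of the one-step partition function ratio via $e^{\beta J\sigma_X}=\cosh(\beta J)[1+\tanh(\beta J)\sigma_X]$, the same elementary inequality (the paper uses $1+x\leq e^x$ before taking logs, which is identical to your $\ln(1+x)\leq x$), and the same telescoping. No differences worth noting.
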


\begin{proof}
The lemma follows by iterating
\begin{equation}
\label{ineq:basic}
p_{\Lambda}(\beta,\boldsymbol{X}_{[n+1]},\boldsymbol{J}_{[n+1]}) 
-  p_{\Lambda}(\beta,\boldsymbol{X}_{[n]},\boldsymbol{J}_{[n]})\,
\leq\, \frac{\ln \cosh(\beta J_{n+1})  + \tanh(\beta J_{n+1}) \langle \sigma_{X_n}\rangle_{\Lambda,\beta,\boldsymbol{X}_{[n]},\boldsymbol{J}_{[n]}}}
{|\Lambda|}\, .
\end{equation}
In order to prove this, note
$$
\frac{Z_{\Lambda}(\beta,\boldsymbol{X}_{[n+1]},\boldsymbol{J}_{[n+1]})}
{Z_{\Lambda}(\beta,\boldsymbol{X}_{[n]},\boldsymbol{J}_{[n]})}\,
=\, \left\langle e^{\beta J_{n+1} \sigma_{X_{n+1}}}\right\rangle_{\Lambda,\beta,\boldsymbol{X}_{[n]},\boldsymbol{J}_{[n]}}\, .
$$
But since $\sigma_{X_{n+1}}$ is either $+1$ or $-1$, we have
$e^{\beta J_{n+1} \sigma_{X_{n+1}}} = \cosh(\beta J_{n+1}) [1 + \tanh(\beta J_{n+1}) \sigma_{X_{n+1}}]$.
So
$$
\frac{Z_{\Lambda}(\beta,\boldsymbol{X}_{[n+1]},\boldsymbol{J}_{[n+1]})}
{Z_{\Lambda}(\beta,\boldsymbol{X}_{[n]},\boldsymbol{J}_{[n]})}\,
=\, \cosh(\beta J_{n+1})\left[1 + \tanh(\beta J_{n+1}) 
\left\langle \sigma_{X_{n+1}}\right\rangle_{\Lambda,\beta,\boldsymbol{X}_{[n]},\boldsymbol{J}_{[n]}}\right]\, .
$$
But since $1+x \leq e^x$, it follows that
$$
\frac{Z_{\Lambda}(\beta,\boldsymbol{X}_{[n+1]},\boldsymbol{J}_{[n+1]})}
{Z_{\Lambda}(\beta,\boldsymbol{X}_{[n]},\boldsymbol{J}_{[n]})}\,
\leq\, \cosh(\beta J_{n+1}) e^{\tanh(\beta J_{n+1}) 
\left\langle \sigma_{X_{n+1}}\right\rangle_{\Lambda,\beta,\boldsymbol{X}_{[n]},\boldsymbol{J}_{[n]}}}\, .
$$
Taking logarithms yields (\ref{ineq:basic}).
\end{proof}

\begin{corollary}
\label{cor:pressbound}
If $J_1,\dots,J_{N(\Lambda)}$ are non-random, then
\begin{equation}
\label{ineq:nonrandom}
p_{\Lambda}(\beta,\boldsymbol{X},\boldsymbol{J})\, 
\leq\,
p_{\Lambda}\left(\beta,\boldsymbol{X}_{[n]},\boldsymbol{J}_{[n]}\right)\, 
+ \frac{1}{|\Lambda|}\, \sum_{k=n+1}^{N(\Lambda)} \left[\ln \cosh(\beta J_k) + |\tanh(\beta J_k)|\right]\, .
\end{equation}
If $J_1,\dots,J_{N(\Lambda)}$ are random and dependent, then
\begin{equation}
\label{ineq:Dependent}
\E\left[p_{\Lambda}(\beta,\boldsymbol{X},\boldsymbol{J})\right]\, 
\leq\, \E\left[p_{\Lambda}\left(\beta,\boldsymbol{X}_{[n]},\boldsymbol{J}_{[n]}\right)\right]\, 
+ \frac{1}{|\Lambda|}\, \sum_{k=n+1}^{N(\Lambda)} 
\left(\E\left[\ln \cosh(\beta J_k)\right] + \E\left[\left|\tanh(\beta J_k)\right|\right]\right)\, .
\end{equation}
If $J_1,\dots,J_{N(\Lambda)}$ are independent, then
\begin{equation}
\label{ineq:Independent}
\E\left[p_{\Lambda}(\beta,\boldsymbol{X},\boldsymbol{J})\right]\, 
\leq\, \E\left[p_{\Lambda}\left(\beta,\boldsymbol{X}_{[n]},\boldsymbol{J}_{[n]}\right)\right]\, 
+ \frac{1}{|\Lambda|}\, \sum_{k=n+1}^{N(\Lambda)} 
\left(\E\left[\ln \cosh(\beta J_k)\right] + \left|\E\left[\tanh(\beta J_k)\right]\right|\right)\, .
\end{equation}
\end{corollary}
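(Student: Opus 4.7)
The plan is to start from Lemma~\ref{lem:bound}, which already gives a one-sided bound on $p_{\Lambda}(\beta,\boldsymbol{X},\boldsymbol{J}) - p_{\Lambda}(\beta,\boldsymbol{X}_{[n]},\boldsymbol{J}_{[n]})$ involving the correlator $\langle \sigma_{X_k}\rangle_{\Lambda,\beta,\boldsymbol{X}_{[k-1]},\boldsymbol{J}_{[k-1]}}$. All three inequalities in the corollary should follow by disposing of this correlator in the appropriate way, using only the trivial bound $|\langle \sigma_{X_k}\rangle| \leq 1$ together with an appropriate version of Jensen/independence for the probabilistic statements.

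For the non-random inequality~(\ref{ineq:nonrandom}), I would simply observe that since $\langle \sigma_{X_k}\rangle$ is an average of $\pm 1$ values it lies in $[-1,1]$, so
$\tanh(\beta J_k)\langle \sigma_{X_k}\rangle \leq |\tanh(\beta J_k)|$ pointwise, and substituting this into Lemma~\ref{lem:bound} produces~(\ref{ineq:nonrandom}) directly.

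For the dependent random case~(\ref{ineq:Dependent}), I would take $\E$ of both sides of the bound from Lemma~\ref{lem:bound}. The term $\E[\ln\cosh(\beta J_k)]$ appears unchanged, while for the second term I would use $|\langle \sigma_{X_k}\rangle| \leq 1$ inside the expectation to get $\E[\tanh(\beta J_k)\langle \sigma_{X_k}\rangle] \leq \E[|\tanh(\beta J_k)\langle \sigma_{X_k}\rangle|] \leq \E[|\tanh(\beta J_k)|]$, which yields~(\ref{ineq:Dependent}).

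The independent case~(\ref{ineq:Independent}) is the most delicate and is the only step requiring a genuine probabilistic observation. The key point is that the Gibbs average $\langle \sigma_{X_k}\rangle_{\Lambda,\beta,\boldsymbol{X}_{[k-1]},\boldsymbol{J}_{[k-1]}}$ is a function only of $J_1,\dots,J_{k-1}$ and therefore, when $J_1,\dots,J_{N(\Lambda)}$ are independent, is independent of $J_k$. Hence the expectation factors:
$$
\E\left[\tanh(\beta J_k)\,\langle \sigma_{X_k}\rangle_{\Lambda,\beta,\boldsymbol{X}_{[k-1]},\boldsymbol{J}_{[k-1]}}\right]
\,=\, \E[\tanh(\beta J_k)]\cdot \E\left[\langle \sigma_{X_k}\rangle_{\Lambda,\beta,\boldsymbol{X}_{[k-1]},\boldsymbol{J}_{[k-1]}}\right].
$$
Since the second factor lies in $[-1,1]$, the product is bounded above by $|\E[\tanh(\beta J_k)]|$. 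Summing and combining with $\E[\ln\cosh(\beta J_k)]$ yields~(\ref{ineq:Independent}). The improvement over~(\ref{ineq:Dependent}) is exactly the gain one gets from Jensen applied to $|\cdot|$: independence lets us move the absolute value outside the expectation, and this is the only nontrivial point in the argument.
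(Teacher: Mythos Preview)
Your proposal is correct and follows essentially the same route as the paper's own proof: start from the bound of Lemma~\ref{lem:bound}, use $|\langle\sigma_{X_k}\rangle|\le 1$ to obtain (\ref{ineq:nonrandom}) and (\ref{ineq:Dependent}), and for (\ref{ineq:Independent}) exploit that $\langle\sigma_{X_k}\rangle_{\Lambda,\beta,\boldsymbol{X}_{[k-1]},\boldsymbol{J}_{[k-1]}}$ depends only on $J_1,\dots,J_{k-1}$ so that the expectation factors. The only cosmetic difference is that the paper derives (\ref{ineq:Dependent}) by taking expectations of the already-established pointwise bound (\ref{ineq:nonrandom}), whereas you take expectations of the lemma and then bound inside; these are of course equivalent.
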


\begin{proof}
From the lemma, we know
\begin{equation}
\label{ineq:prelim}
p_{\Lambda}(\beta,\boldsymbol{X},\boldsymbol{J})\, \leq\, 
p_{\Lambda}\left(\beta,\boldsymbol{X}_{[n]},\boldsymbol{J}_{[n]}\right)\, 
+ \frac{1}{|\Lambda|}\, \sum_{k=n+1}^{N(\Lambda)} \left[\ln \cosh(\beta J_k) + \tanh(\beta J_k) 
\langle \sigma_{X_k}\rangle_{\Lambda,\beta,\boldsymbol{X}_{[k-1]},\boldsymbol{J}_{[k-1]}}\right]\, .
\end{equation}
But $|\langle \sigma_{X_k}\rangle_{\Lambda,\beta,\boldsymbol{X}_{[k-1]},\boldsymbol{J}_{[k-1]}}|\leq 1$, and this leads to (\ref{ineq:nonrandom}).
Taking expectations of that leads to (\ref{ineq:Dependent}).
If the $J_k$'s are all independent, then taking expectations of (\ref{ineq:prelim}), we obtain
\begin{align*}
\E\left[p_{\Lambda}(\beta,\boldsymbol{X},\boldsymbol{J})\right]\, 
&\leq\, \E\left[p_{\Lambda}\left(\beta,\boldsymbol{X}_{[n]},\boldsymbol{J}_{[n]}\right)\right]\\ 
&\quad + \frac{1}{|\Lambda|}\, \sum_{k=n+1}^{N(\Lambda)} 
\left(\E\left[\ln \cosh(\beta J_k)\right] + \E\left[\tanh(\beta J_k)\right] 
\E\left[\langle \sigma_{X_k}\rangle_{\Lambda,\beta,\boldsymbol{X}_{[k-1]},\boldsymbol{J}_{[k-1]}}\right]\right)\, .
\end{align*}
But $\left|\E\left[\langle \sigma_{X_k}\rangle_{\Lambda,\beta,\boldsymbol{X}_{[k-1]},\boldsymbol{J}_{[k-1]}}\right]\right|\leq 1$.
So this leads to (\ref{ineq:Independent}).
\end{proof}

\section{Application to Ferromagnets}

Before considering spin-glasses, we mention that the lemma, combined with the first Griffiths inequality,
Theorem \ref{thm:GKS}, implies the existence of the pressure for a broad range of ferromagnets.
Suppose for each finite $X \subset \Z^d$ there is a nonnegative coupling $J_X$ such that $J_{\tau_j(X)}=J_X$
where $\tau_j(X) = \{i+j\, :\, i \in X\}$.
In other words, the couplings are translation invariant.
We also suppose $J_{\emptyset}=0$ for simplicity.
Define
$$
H_{\Lambda}(\s)\, =\, \sum_{X \subseteq \Lambda} J_X \s_X\, ,
$$
for all $\s \in \O_{\Lambda}$.
Let $p_N(\beta) = p_{[1,N]^d}(\beta)$.
Suppose that $N_1\geq 1$ is fixed and $N =  N_1 m + r$ where $m\leq 0$ and $r\geq 0$.
Then one has following inclusion
$$
[1,N]^d\, \supseteq \bigsqcup_{j \in [0,m-1]^d} \tau_{N_1 j}([1,N_1]^d)\, ,
$$
where $N_1 j = (N_1 j_1,\dots,N_1 j_d)$ for $j=(j_1,\dots,j_d)$.
Let us define an inequality $H_{\Lambda} \succcurlyeq H_{\Lambda}'$ if the couplings
in $H_{\Lambda}$ are all greater than or equal to the couplings in $H_{\Lambda}'$.
Therefore, we see that
$$
H_{[1,N]^d}\, \succcurlyeq\, \sum_{j \in [0,m-1]^d} H_{\tau_{N_1 j}([1,N_1]^d)}\, ,
$$
because in the latter we simply set $J_X$ to $0$ for any $X$ that does not fit
entirely in one $\tau_{N_1 j}([1,N_1]^d)$.
Then the first Griffiths inequality, combined with an interpolation argument,
implies that the thermodynamic potential of the full Hamiltonian $H_{\Lambda}$ dominates the 
thermodynamic potential of the right hand side, which specifically means
$$
p_N(\beta)\, \geq\, \frac{(mN_1)^d}{N^d}\, p_{N_1}(\beta) + \left(1 - \frac{(m N_1)^d}{N^d}\right) \ln(2)\, .
$$
We can choose $m$ so that $N - mN_1 \leq N_1-1$. Therefore, $\lim_{N \to \infty} (mN_1/N) = 1$.
So, we deduce
$$
\liminf_{N \to \infty} p_N(\beta)\, \geq\, p_{N_1}(\beta)\, .
$$
Since this was true for all $N_1$, this implies
$$
\liminf_{N \to \infty} p_N(\beta)\, \geq\, \sup_{N_1\geq 1} p_{N_1}(\beta)\, \geq\, \limsup_{N \to \infty} p_N(\beta)\, .
$$
This means that $\lim_{N \to \infty} p_N(\beta)$ exists and equals $\sup_{N\geq 1} p_N(\beta)$.
Of course, it is possible that the supremum, and hence the limit, may equal $\infty$.
That is precisely where we use the bounds from the last section.

\begin{corollary}
\label{cor:Ferro}
Define
$$
\|J\|\, :=\, 
\sum_{\substack{X \subset \Z^d \\ X \ni 0}} \frac{J_X}{|X|}\, .
$$
If $\|J\|<\infty$ then $p(\beta) = \lim_{N \to \infty} p_N(\beta)$ exists as a finite number and $p(\beta)\, \leq\, \ln(2) + 2 \beta \|J\|$.
\end{corollary}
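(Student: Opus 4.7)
The plan is to combine the super-additivity conclusion established just above the statement --- namely $\lim_{N\to\infty} p_N(\beta) = \sup_{N\geq 1} p_N(\beta)$ --- with the non-random one-sided bound (\ref{ineq:nonrandom}) from Corollary \ref{cor:pressbound}. Super-additivity already gives convergence (possibly to $+\infty$), so the only remaining task is to produce a uniform upper bound on $p_N(\beta)$ in terms of $\|J\|$.

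First I would enumerate the nonempty subsets of $\Lambda = [1,N]^d$ as $X_1,\dots,X_{N(\Lambda)}$ with deterministic couplings $J_k := J_{X_k}$, and apply (\ref{ineq:nonrandom}) with $n=0$. The empty-Hamiltonian pressure is $|\Lambda|^{-1}\ln 2^{|\Lambda|} = \ln 2$, so the bound reads
\[
p_N(\beta)\, \leq\, \ln 2\, +\, \frac{1}{|\Lambda|}\sum_{\emptyset\neq X\subseteq\Lambda}\bigl[\ln\cosh(\beta J_X) + |\tanh(\beta J_X)|\bigr].
\]
I would then invoke the elementary inequalities $\ln\cosh x \leq |x|$ (coming from $\cosh x \leq e^{|x|}$) and $|\tanh x| \leq |x|$, combined with $J_X \geq 0$, to bound the right-hand side by $\ln 2 + (2\beta/|\Lambda|)\sum_{X\subseteq\Lambda} J_X$.

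Next I would rewrite the remaining sum via translation invariance. Every nonempty $Y\subseteq\Lambda$ can be written as $\tau_i(X)$ with $X=Y-i$ and $i\in Y$, and $J_Y=J_X$; summing over pairs $(Y,i)$ with $i\in Y\subseteq \Lambda$,
\[
\sum_{\emptyset\neq Y\subseteq\Lambda} J_Y\, =\, \sum_{X\ni 0}\frac{J_X}{|X|}\,\#\{i\in\Z^d : i+X\subseteq\Lambda\}\, \leq\, |\Lambda|\cdot\|J\|,
\]
since $0\in X$ forces every admissible $i$ to lie in $\Lambda$. Combining, I obtain $p_N(\beta)\leq \ln 2 + 2\beta\|J\|$ uniformly in $N$, and super-additivity then yields both the existence and finiteness of the limit together with the stated bound.

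I expect no substantive obstacle: the argument reduces to one application of (\ref{ineq:nonrandom}) with $n=0$ followed by a one-line counting identity. The only step requiring a moment's thought is recognising why the weight $|X|^{-1}$ appears in the definition of $\|J\|$ --- it is precisely what compensates for the $|X|$-fold overcounting when one sums translation representatives by $i\in Y$ instead of by a canonical choice of basepoint.
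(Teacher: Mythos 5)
Your proposal is correct and follows essentially the same route as the paper: apply (\ref{ineq:nonrandom}) with $n=0$, use $\ln\cosh x\leq|x|$ and $|\tanh x|\leq|x|$, and control $|\Lambda|^{-1}\sum_{X\subseteq\Lambda}J_X$ by $\|J\|$ via the translation-invariance counting argument, then invoke the super-additivity established just before the corollary. The only cosmetic difference is that you count translates of a fixed $X\ni 0$ while the paper writes $J_X=\sum_{i\in X}J_X/|X|$ and exchanges the order of summation; these are the same identity.
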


\begin{proof}
Enumerate the subsets $X\subseteq \Lambda$ any way, in order to express $H_{\Lambda}(\s)$ as 
$H_{\Lambda}(\s,\boldsymbol{X},\boldsymbol{J})$, as in the last section.
Note that $H_{\Lambda}(\s,\boldsymbol{X}_{[0]},\boldsymbol{J}_{[0]}) = 0$ 
so that 
$p_{\Lambda}(\beta,\boldsymbol{X}_{[0]},\boldsymbol{J}_{[0]})=\ln(2)$.
Then taking $n=0$ in (\ref{ineq:nonrandom}) and noting that
$\ln \cosh(x) \leq |x|$ and $|\tanh(x)| \leq |x|$, we obtain
$$
p_{\Lambda}(\beta)\, \leq\, \ln(2) + \frac{2\beta}{|\Lambda|} \sum_{X \subseteq \Lambda} J_X\, .
$$
But, of course, using translation invariance and the definition of $\|J\|$,
$$
\frac{1}{|\Lambda|} \sum_{X \subseteq \Lambda} J_X\, =\, \frac{1}{|\Lambda|} \sum_{X \subseteq \Lambda} \sum_{i \in X} \frac{J_X}{|X|}\,
=\, \frac{1}{|\Lambda|} \sum_{i \in \Lambda} \sum_{\substack{X \subset \Lambda \\ X \ni i}} \frac{J_X}{|X|}\,
\leq\, \|J\|\, .
$$
So each $\Lambda$ has the bound $p_{\Lambda}(\beta) \leq \ln(2) + 2 \beta \|J\|$, and hence $\sup_{N\geq 1} p_{[1,N]^d}(\beta)$ also satisfies
the bound.
\end{proof}
For the ferromagnet, the result obtained in the corollary is well-known.
See, for example, \cite{Ruelle}.
But for spin glasses, the same line of reasoning leads to new results.

\section{Application to Spin Glasses}

\begin{corollary}
\label{cor:SG2ndMoment}
Suppose that there are independent, random couplings $J_X$ for each finite subset $X \subset \Z^d$,
which are centered and such that the distributions are translation invariant.
(For simplicity, suppose $J_{\emptyset} \equiv 0$, again.)
Define
$$
\|J\|_2^2\, =\, \sum_{\substack{X \subset Z^d \\ X \ni 0}} \frac{\E[J_X^2]}{|X|}\, .
$$
Assuming $\|J\|_2^2<\infty$, we have $p(\beta) = \lim_{N \to \infty} p_{[1,N]^d}(\beta)$ exists as a finite number
and satisfies the bound $p(\beta) \leq \ln(2) + \frac{3 \beta^2}{2} \|J\|_2^2$, where $p_{\Lambda}(\beta)$ is the quenched pressure
$$
p_{\Lambda}(\beta)\, =\, \E[p_{\Lambda}(\beta,\boldsymbol{J})]\, .
$$
\end{corollary}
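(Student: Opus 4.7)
The plan is to mirror the argument for Corollary~\ref{cor:Ferro}, with Griffiths' inequality replaced by Contucci--Lebowitz (Theorem~\ref{thm:CL}) for super-additivity, and with the upper bound coming from inequality~(\ref{ineq:Independent}) in place of~(\ref{ineq:nonrandom}).

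For the super-additivity, I would normalize each coupling by writing $J_X = \sigma_X \tilde J_X$ with $\sigma_X = (\E[J_X^2])^{1/2}$, so that Theorem~\ref{thm:CL} applies with nonnegative coupling strengths $\lambda_X = \sigma_X$. Fix $N_1 \geq 1$ and write $N = m N_1 + r$ with $0 \leq r < N_1$. Set $\sigma_X = 0$ for every $X \subseteq [1,N]^d$ not contained in one of the tiles $\tau_{N_1 j}([1,N_1]^d)$, $j \in [0,m-1]^d$. By Theorem~\ref{thm:CL} this only lowers the quenched pressure; but the modified partition function factorizes over the disjoint tiles, times $2^{N^d - (mN_1)^d}$ for the leftover spins, and since the $J_X$ for distinct $X$ are independent while their distributions are translation invariant, taking $\E[\ln(\cdot)]$ gives
$$
p_N(\beta)\, \geq\, \frac{(mN_1)^d}{N^d}\, p_{N_1}(\beta) + \left(1 - \frac{(mN_1)^d}{N^d}\right) \ln(2)\, .
$$
Letting $N \to \infty$ with $N_1$ fixed yields $\liminf_N p_N(\beta) \geq p_{N_1}(\beta)$, and hence $\liminf_N p_N(\beta) \geq \sup_{N_1 \geq 1} p_{N_1}(\beta)$, exactly as in Section~3.

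For the uniform upper bound, I would enumerate the subsets $X \subseteq \Lambda$ to recast the Hamiltonian in the form of Section~2, apply~(\ref{ineq:Independent}) with $n = 0$, and use $p_\Lambda(\beta,\boldsymbol{X}_{[0]},\boldsymbol{J}_{[0]}) = \ln(2)$. Each summand I would estimate by (i) the inequality $\cosh(y) \leq e^{y^2/2}$, giving $\E[\ln\cosh(\beta J_X)] \leq (\beta^2/2) \E[J_X^2]$, and (ii) the centering trick, rewriting $\E[\tanh(\beta J_X)] = \E[\tanh(\beta J_X) - \beta J_X]$ and invoking the pointwise bound $|\tanh(y) - y| \leq y^2$ (easily verified by separating $|y| \leq 1$ and $|y| \geq 1$), giving $|\E[\tanh(\beta J_X)]| \leq \beta^2 \E[J_X^2]$. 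Thus each term contributes at most $(3\beta^2/2) \E[J_X^2]$, and the translation-invariance/double-counting computation from the proof of Corollary~\ref{cor:Ferro} bounds $|\Lambda|^{-1} \sum_{X \subseteq \Lambda} \E[J_X^2] \leq \|J\|_2^2$. This gives $p_\Lambda(\beta) \leq \ln(2) + (3\beta^2/2)\|J\|_2^2$ uniformly in $\Lambda$, which together with the first step sandwiches $\lim_N p_N(\beta) = \sup_N p_N(\beta)$ between $\ln(2)$ and $\ln(2) + (3\beta^2/2)\|J\|_2^2$.

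The main obstacle is the tanh term. Since $J_X$ is centered but not assumed symmetric, $\E[\tanh(\beta J_X)]$ need not vanish, and the naive bound $|\tanh(y)| \leq |y|$ only produces first-moment control, which would force a hypothesis strictly stronger than $\|J\|_2^2 < \infty$. The centering subtraction, combined with the quadratic pointwise estimate $|\tanh(y) - y| \leq y^2$, is precisely what converts this into the second-moment bound the statement advertises; everything else parallels the ferromagnet argument of Section~3.
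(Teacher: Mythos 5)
Your proposal is correct and follows essentially the same route as the paper: super-additivity via the Contucci--Lebowitz inequality combined with the tiling argument of Section~3, and the uniform upper bound from~(\ref{ineq:Independent}) with $\ln\cosh(x)\le x^2/2$ for the first term and the centering of $\tanh(\beta J_X)$ by $\beta J_X$ together with $|\tanh(y)-y|\le y^2$ for the second. The only difference is that you spell out the normalization $J_X=\sigma_X\tilde J_X$ needed to put the model in the form of Theorem~\ref{thm:CL}, a detail the paper leaves implicit.
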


This corollary is comparable to results of Khanin and Sinai \cite{KhaninSinai} and van Enter and van Hemmen \cite{vanEntervanHemmen},
except that we do not attempt to prove convergence in the van Hove sense, settling instead for convergence in the Fisher
sense (see, for example, \cite{vanEntervanHemmen, vanEnterFernandezSokal} for the difference), but also, we 
do not make any conditions on finite moments
of the random couplings beyond existence of the variance.

\begin{proof}
Using the CL inequality (\ref{ineq:CL}), we conclude that $\lim_{N \to \infty} p_N(\beta) = \sup_{N\geq 1} p_N(\beta)$,
where $p_N(\beta)$ is the quenched pressure $p_{[1,N]^d}(\beta)$, using the same argument as in the last section.
All that remains is to obtain bounds.
From (\ref{ineq:Independent}), we know
\begin{equation}
\label{ineq:SG2ndFirst}
p_{\Lambda}(\beta)\, =\, \E[p_{\Lambda}(\beta,\boldsymbol{J})]\, 
\leq\, \ln(2) + \frac{\beta}{|\Lambda|}\, \sum_{X \subseteq \Lambda} \left(\E[\ln\cosh(\beta J_X)] + |\E[\tanh(\beta J_X)]|\right)\, .
\end{equation}
Now we know $\E[\ln\cosh(\beta J_X)] \leq \beta^2 \E[J_X^2]/2$ because $\ln \cosh(x) \leq x^2/2$.
Also, by assumption, we know $\E[J_X]=0$.
So 
$$
|\E[\tanh(\beta J_X)]|\, =\, |\E[\tanh(\beta J_X) - \beta J_X]|\, \leq\, \E\left[|\tanh(\beta J_X) - \beta J_X|\right]\, .
$$
But $|x-\tanh(x)| = \int_0^{|x|} \tanh^2(y)\, dy \leq |x| \tanh^2(x) \leq \min(|x|,|x|^3)$.
It is easy to see that $\min(|x|,|x|^3) \leq x^2$.
So we obtain, $|\E[\tanh(\beta J_X)]| \leq \beta^2 \E[J_X^2]$.
Combining these bounds with (\ref{ineq:SG2ndFirst}) leads to the desired bound.
\end{proof}

The main improvement over previous results by Khanin and Sinai, and van Hemmen and van Enter,
is that we weakened the hypotheses on the moments of the random couplings $J_X$.
In fact, the condition in the corollary is just the specialization of the
thermodynamic stability condition (\ref{ineq:CG}) from \cite{ContucciGraffi}, specialized to translation-invariant distributions
for the couplings.
Therefore, it is optimal.
However, one can imagine a situation with even fatter tails, so that even the variance does not exist.
In that case, we can apply the following corollary:

\begin{corollary}
\label{cor:SG1stMoment}
Suppose that there are independent, random couplings $J_X$ for each finite subset $X \subset \Z^d$,
which are centered and such that the distributions are translation invariant (and $J_{\emptyset} \equiv 0$).
Define
$$
\|J\|_1\, =\, \sum_{\substack{X \subset Z^d \\ X \ni 0}} \frac{\E[|J_X|]}{|X|}\, .
$$
Assuming $\|J\|_1<\infty$, the thermodynamic limit of the quenched pressure exists,
$p(\beta) = \lim_{N \to \infty} p_{[1,N]^d}(\beta)$,
and satisfies the bound $p(\beta) \leq \ln(2) + 2 \beta \|J\|_1$. 
\end{corollary}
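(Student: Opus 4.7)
The plan is to mirror the proof of Corollary \ref{cor:SG2ndMoment} step for step, with the only substantive change being the choice of elementary pointwise bounds on $\ln\cosh$ and $\tanh$. The centered, translation-invariant, independent hypotheses still allow Theorem \ref{thm:CL} to be invoked, and that is all we need to run the super-additivity argument from Section 3, giving $\lim_{N\to\infty} p_N(\beta) = \sup_{N \geq 1} p_N(\beta)$. Existence of the limit as a finite number then reduces to producing a finite upper bound on this supremum.

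For the bound, I would enumerate the subsets of $\Lambda$ in some order, writing $H_{\Lambda}$ in the form of Section 2, and apply inequality (\ref{ineq:Independent}) with $n=0$, using $p_{\Lambda}(\beta, \boldsymbol{X}_{[0]}, \boldsymbol{J}_{[0]}) = \ln(2)$ to get
$$
p_{\Lambda}(\beta) \leq \ln(2) + \frac{1}{|\Lambda|} \sum_{X \subseteq \Lambda} \bigl(\E[\ln\cosh(\beta J_X)] + |\E[\tanh(\beta J_X)]|\bigr).
$$
Whereas in Corollary \ref{cor:SG2ndMoment} the authors used second-order bounds $\ln\cosh(x) \leq x^2/2$ and $|\tanh(x) - x| \leq x^2$, here I would just use the first-order bounds $\ln\cosh(x) \leq |x|$ and $|\tanh(x)| \leq |x|$. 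These give $\E[\ln\cosh(\beta J_X)] \leq \beta\E[|J_X|]$ and $|\E[\tanh(\beta J_X)]| \leq \E[|\tanh(\beta J_X)|] \leq \beta \E[|J_X|]$, producing a combined prefactor of $2\beta$.

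The final step is the same translation-invariance rearrangement as in the proof of Corollary \ref{cor:Ferro}: writing $\sum_{X \subseteq \Lambda} \E[|J_X|] = \sum_{X \subseteq \Lambda} \sum_{i \in X} \E[|J_X|]/|X|$ and interchanging the order of summation gives
$$
\frac{1}{|\Lambda|} \sum_{X \subseteq \Lambda} \E[|J_X|] = \frac{1}{|\Lambda|} \sum_{i \in \Lambda} \sum_{\substack{X \subseteq \Lambda \\ X \ni i}} \frac{\E[|J_X|]}{|X|} \leq \|J\|_1
$$
by translation invariance, yielding $p_{\Lambda}(\beta) \leq \ln(2) + 2\beta\|J\|_1$ uniformly in $\Lambda$, hence the same bound on $\sup_{N \geq 1} p_N(\beta) = \lim_N p_N(\beta)$.

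There is no real obstacle. The only point to flag is that centeredness $\E[J_X]=0$ played two distinct roles in Corollary \ref{cor:SG2ndMoment}: first, to apply Theorem \ref{thm:CL}, and second, to cancel the linear term $\beta J_X$ inside $\tanh$ so as to obtain a $\beta^2\E[J_X^2]$ bound. Here centeredness is still needed for the first role (the super-additivity argument), but not for the second, since the crude bound $|\tanh(\beta J_X)| \leq \beta|J_X|$ is already of the right order in $\|J\|_1$. Everything else is routine.
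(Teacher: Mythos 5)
Your derivation of the upper bound $p_{\Lambda}(\beta) \leq \ln(2) + 2\beta\|J\|_1$ is fine and essentially matches the paper (the paper applies the deterministic bound (\ref{ineq:nonrandom}) pointwise in $\boldsymbol{J}$ and then takes expectations, while you apply (\ref{ineq:Independent}); both give the same constant). The gap is in the other half of the argument. You assert that ``the centered, translation-invariant, independent hypotheses still allow Theorem \ref{thm:CL} to be invoked,'' but the hypotheses of Theorem \ref{thm:CL} as stated include $\E[J_X^2]=1$, i.e.\ finite second moments. The entire point of Corollary \ref{cor:SG1stMoment} is to handle couplings with tails so fat that the variance may be infinite, in which case the Contucci--Lebowitz inequality does not apply directly and the super-additivity of $|\Lambda|\, p_{\Lambda}(\beta)$ is not immediate. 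This is exactly the step the paper flags (``we need to check that (\ref{ineq:CL}) still applies'') and then proves.

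The missing argument is a truncation: set $J_X^{(1)} = J_X\,\chi_{[-R,R]}(J_X) - \E[J_X\,\chi_{[-R,R]}(J_X)]$, which is centered and has finite second moment, so Theorem \ref{thm:CL} applies to the truncated model and yields the super-additivity (hence $\lim_N p_N^{(1)}(\beta) = \sup_N p_N^{(1)}(\beta)$) for each fixed $R$. Writing $J_X^{(2)} = J_X - J_X^{(1)}$ and using the dependent-coupling bound (\ref{ineq:Dependent}) --- note $J_X^{(2)}$ is not independent of $J_X^{(1)}$, which is why (\ref{ineq:Dependent}) rather than (\ref{ineq:Independent}) is needed --- one gets $p_{\Lambda}(\beta) - p_{\Lambda}^{(1)}(\beta) \leq \frac{2\beta}{|\Lambda|}\sum_{X}\E[|J_X^{(2)}|]$, and by dominated convergence (with dominating function $|J_X| + \E[|J_X|]$) this tends to $0$ as $R \to \infty$, so the super-additivity passes to the limit and the original quenched pressure converges. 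Without some version of this step your proof does not establish existence of the limit, only the uniform upper bound.
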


\begin{proof}
In this case the upper bound on the quenched pressure is easy.
We already know that 
$$
p_{\Lambda}(\beta,\boldsymbol{J})\, \leq\, \ln(2) + \frac{2\beta}{|\Lambda|}\, \sum_{X \subseteq \Lambda} |J_X|\, ,
$$
as in the proof of Corollary \ref{cor:Ferro}.
Taking expectations leads to $p_{\Lambda}(\beta) := \E[p_{\Lambda}(\beta,\boldsymbol{J})] \leq \ln(2) + 2 \beta \|J\|_1$.
However, in this case, we need to check that (\ref{ineq:CL}) still applies.
The situation for the CL inequality was that $\E[J_X^2]=1<\infty$ for all $X$ (and then the random variables
were scaled by multipliers $\lambda_X$).
It is intuitively obvious that an integrated version of the inequality still holds for the case where the first moment is finite, but not the second.
So let us quickly prove it.
Define the centered truncation $J_X^{(1)} = J_X \cdot \chi_{[-R,R]}(J_X) - \E[J_X \cdot \chi_{[-R,R]}(J_X)]$, where $R<\infty$ is arbitrary.
Let $J_X^{(2)} = J_X - J_X^{(1)}$.
Clearly $J_X^{(1)}$ does have a second moment.
Therefore,
we know that (\ref{ineq:CL}) is true if we replace all $J_X$'s by $J_X^{(1)}$'s.
In particular, this means we have the necessary type of super-additivity as long as we replace the $J_X$'s by $J_X^{(1)}$'s.
All that remains is to check that if we take $R \to \infty$, we recover the original pressure.
But, using (\ref{ineq:Dependent}), we see that
\begin{equation}
p_{\Lambda}(\beta) - p_{\Lambda}^{(1)}(\beta)\, 
\leq\, \frac{1}{|\Lambda|}\, \sum_{X \subseteq \Lambda} 
\left(\E\left[\ln \cosh(\beta J_X^{(2)})\right] + \E\left[\left|\tanh(\beta J_X^{(2)})\right|\right]\right)\, 
\leq\, \frac{2\beta}{|\Lambda|}\, \sum_{X \subseteq \Lambda} \E[|J_X^{(2)}|]\, .
\end{equation}
Note that $J_X^{(2)}$ is dependent on $J_X^{(1)}$, but (\ref{ineq:Dependent}) applies in this case.
All the $|J_X^{(2)}|$'s are dominated by $|J_X| + \E[|J_X|]$.
(The $\E[|J_X|]$ comes from the shift to $J_X^{(1)}$.)
Also, clearly as $R \to \infty$, we have $J_X^{(2)} = J_X \cdot [1 - \chi_{[-R,R]}(J_X)]  - \E[J_X \cdot (1 - \chi_{[-R,R]}(J_X))]$
converging to $0$, almost surely.
So by the Dominated Convergence Theorem, it is true that $\lim_{R \to \infty} p_{\Lambda}^{(1)}(\beta) = p_{\Lambda}(\beta)$.
Therefore, one recovers the Contucci-Lebowitz super-additivity in the limit.
\end{proof}

We could easily combine the two types of results. 
\begin{corollary}
Suppose $J_X$ and $J_X'$ are random, centered couplings (with $J_{\emptyset}\equiv J_{\emptyset}' \equiv 0$)
which are all independent and such that $\|J\|_2^2 < \infty$ and $\|J'\|_1 < \infty$.
Then taking
$$
H_{\Lambda}(\s)\, =\, \sum_{X \subseteq \Lambda} (J_X + J_X') \s_X\, ,
$$
the thermodynamic limit of the quenched pressure exists and satisfies $p(\beta) \leq \ln(2) + \frac{3\beta^2}{2}\, \|J\|_2^2 + 2 \beta \|J'\|_1$.
\end{corollary}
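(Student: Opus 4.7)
The plan is to combine the techniques from the proofs of Corollaries \ref{cor:SG2ndMoment} and \ref{cor:SG1stMoment}: use (\ref{ineq:Independent}) with $n=0$ to establish the upper bound, splitting the contributions from the $J_X$ family (which gets the second-moment bound) and the $J_X'$ family (which only admits the first-moment bound), and use the CL inequality, extended via a truncation, to get Fisher-sense super-additivity for the combined system.

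For the upper bound, I would enumerate all the pairs $(X,J_X)$ and $(X,J_X')$ for $X \subseteq \Lambda$ into a single sequence of independent, centered random couplings. Applying (\ref{ineq:Independent}) with $n=0$, and noting that $p_\Lambda(\beta,\boldsymbol{X}_{[0]},\boldsymbol{J}_{[0]}) = \ln 2$, yields
$$
\E[p_\Lambda(\beta)] \leq \ln 2 + \frac{1}{|\Lambda|}\sum_{X\subseteq\Lambda}\bigl(\E[\ln\cosh(\beta J_X)] + |\E[\tanh(\beta J_X)]|\bigr) + \frac{1}{|\Lambda|}\sum_{X\subseteq\Lambda}\bigl(\E[\ln\cosh(\beta J_X')] + |\E[\tanh(\beta J_X')]|\bigr).
$$
For the $J_X$ terms I would invoke $\ln\cosh(x) \leq x^2/2$ and $|\tanh(x)-x| \leq x^2$ (using $\E[J_X]=0$), exactly as in Corollary \ref{cor:SG2ndMoment}, bounding each contribution by $(3\beta^2/2)\E[J_X^2]$. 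For the $J_X'$ terms I would use the cruder estimates $\ln\cosh(x) \leq |x|$ and $|\tanh(x)| \leq |x|$, as in Corollary \ref{cor:SG1stMoment}, bounding each by $2\beta\,\E[|J_X'|]$. Translation invariance converts these two sums, once divided by $|\Lambda|$, into $\|J\|_2^2$ and $\|J'\|_1$ respectively, giving the stated inequality for every $\Lambda$.

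For existence of the limit, the main obstacle is that Theorem \ref{thm:CL} requires finite second moments, which the $J_X'$ may lack. I would resolve this as in Corollary \ref{cor:SG1stMoment}: fix $R<\infty$ and set $J_X'^{(1)} = J_X'\chi_{[-R,R]}(J_X') - \E[J_X'\chi_{[-R,R]}(J_X')]$, with remainder $J_X'^{(2)} = J_X' - J_X'^{(1)}$. The couplings $J_X + J_X'^{(1)}$ are then centered, independent across distinct $X$, and square-integrable, so Theorem \ref{thm:CL} applies to the truncated model and yields Fisher-sense super-additivity via the interpolation argument of Section 3. Inequality (\ref{ineq:Dependent}) controls the difference between the truncated and full quenched pressures by $(2\beta/|\Lambda|)\sum_{X\subseteq\Lambda}\E[|J_X'^{(2)}|]$, and dominated convergence (with $|J_X'^{(2)}| \leq |J_X'| + \E[|J_X'|] \in L^1$) sends this to $0$ as $R\to\infty$. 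Passing to the limit preserves the super-additivity, which combined with the upper bound of the previous paragraph gives existence of $p(\beta) = \lim_{N\to\infty} p_{[1,N]^d}(\beta)$ together with the asserted inequality.
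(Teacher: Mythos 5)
Your proposal is correct and follows essentially the same route as the paper: the paper's own proof simply enumerates the two families $\{J_X\}$ and $\{J_X'\}$ into one sequence of independent couplings, applies Corollary \ref{cor:pressbound} with the second-moment bounds from Corollary \ref{cor:SG2ndMoment} for $J$ and the first-moment bounds plus truncation argument from Corollary \ref{cor:SG1stMoment} for $J'$, exactly as you spell out. Your version just makes explicit the details that the paper compresses into ``the bounds from before then imply the result.''
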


\begin{proof}
This is the setting of Corollary \ref{cor:pressbound}.
Namely, define $N(\Lambda) = 2 \cdot 2^{|\Lambda|}$ and let $X_1,\dots,X_{2^{|\Lambda|}}$
and $X_{2^{|\Lambda|}+1},\dots,X_{2\cdot 2^{\|\Lambda|}}$ each enumerate the subsets of $\Lambda$, independently.
Similarly, let $J_n = J_{X_n}$ for $n\leq 2^{|\Lambda|}$, $J_{n} = J'_{X_{n}}$ for $n> 2^{|\Lambda|}$.
Then the Hamiltonian is defined as $H_{\Lambda}(\sigma,\boldsymbol{X},\boldsymbol{J})$.
The bounds from before then imply the result.
\end{proof}

The result of the previous corollary reproduces a main result from the paper \cite{Zegarlinski} by Zegarlinski.
(We thank A.C.D.~van Enter for bringing this to our attention.)
However, our proof uses the  Griffiths-type inequality (\ref{ineq:CL}) for spin glasses,
which seems to give a simpler, more modern approach.
We can also easily interpolate the results to obtain the following.
\begin{corollary}
Define $\|J\|_p^p = \sum_{X \ni 0} |X|^{-1} \E[|J_X|^p]$. As long as $\|J\|_p < \infty$ for some $1\leq p \leq 2$, then the thermodynamic limit
of the pressure exists in the Fisher sense.
\end{corollary}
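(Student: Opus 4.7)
The plan is to follow the template of Corollaries \ref{cor:SG2ndMoment} and \ref{cor:SG1stMoment}, combining the recursive bound (\ref{ineq:Independent}) for a finite, $\Lambda$-uniform upper bound with the Contucci-Lebowitz inequality (\ref{ineq:CL}) for super-additivity. The unifying observation is the elementary pair of inequalities
\[
\ln \cosh(x)\, \leq\, |x|^p\, ,\qquad \bigl|\tanh(x) - x\bigr|\, \leq\, |x|^p\, ,\qquad (1 \leq p \leq 2)\, ,
\]
each verified by splitting at $|x| = 1$: for $|x| \leq 1$ use $\ln \cosh(x) \leq x^2/2$ and $|\tanh(x) - x| \leq |x|^3/3$ together with $|x|^r \leq |x|^p$ whenever $r \geq p$; for $|x| \geq 1$ use $\ln \cosh(x) \leq |x|$ and $|\tanh(x) - x| \leq |x|$ together with $|x| \leq |x|^p$ for $p \geq 1$.

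For the upper bound, I would enumerate the subsets $X \subseteq \Lambda$ arbitrarily and apply (\ref{ineq:Independent}) with $n = 0$. The $\ln\cosh$ term is controlled directly by $\beta^p \E[|J_X|^p]$. For the $\tanh$ term, the centering of $J_X$ allows
\[
|\E[\tanh(\beta J_X)]|\, =\, |\E[\tanh(\beta J_X) - \beta J_X]|\, \leq\, \E[|\tanh(\beta J_X) - \beta J_X|]\, \leq\, \beta^p\, \E[|J_X|^p]\, .
\]
The translation-invariance resummation used in the proof of Corollary \ref{cor:Ferro} then yields
\[
\E[p_{\Lambda}(\beta)]\, \leq\, \ln(2) + 2 \beta^p\, \|J\|_p^p\, ,
\]
uniformly in $\Lambda$.

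To obtain existence of the limit, I would reuse the truncation argument from Corollary \ref{cor:SG1stMoment}. Jensen's inequality gives $\E[|J_X|] \leq (\E[|J_X|^p])^{1/p} < \infty$ for each $X$, so for any $R > 0$ one may form the centered truncation $J_X^{(1,R)} = J_X \chi_{[-R,R]}(J_X) - \E[J_X \chi_{[-R,R]}(J_X)]$ together with its residual $J_X^{(2,R)} = J_X - J_X^{(1,R)}$. The bounded family $\{J_X^{(1,R)}\}_X$ has second moments, so Theorem \ref{thm:CL} applies and yields Contucci-Lebowitz super-additivity for the truncated quenched pressure $\E[p_{\Lambda}^R(\beta)]$; combined with the cube-packing argument of Section 3, this gives $\lim_{N \to \infty} \E[p_N^R(\beta)] = \sup_N \E[p_N^R(\beta)]$ for each $R$. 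The bound (\ref{ineq:Dependent}) applied to the residual couplings gives
\[
0\, \leq\, \E[p_{\Lambda}(\beta)] - \E[p_{\Lambda}^R(\beta)]\, \leq\, \frac{2\beta}{|\Lambda|}\, \sum_{X \subseteq \Lambda} \E[|J_X^{(2,R)}|]\, ,
\]
and since $J_X^{(2,R)} \to 0$ almost surely as $R \to \infty$ while being dominated by $|J_X| + \E[|J_X|]$, the Dominated Convergence Theorem produces $\lim_{R \to \infty} \E[p_{\Lambda}^R(\beta)] = \E[p_{\Lambda}(\beta)]$. Passing to the limit $R \to \infty$ in super-additivity transfers it to $\E[p_{\Lambda}(\beta)]$ itself, and together with the upper bound of the previous paragraph this gives existence of $\lim_{N \to \infty} \E[p_{[1,N]^d}(\beta)]$ in the Fisher sense.

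The main obstacle is that the centering trick of Corollary \ref{cor:SG2ndMoment}, which converts $\E[\tanh(\beta J_X)]$ into a quantity of order $\beta^2 \E[J_X^2]$, appears to require a second moment, while Corollary \ref{cor:SG1stMoment}'s crude replacement $|\E[\tanh(\beta J_X)]| \leq \beta \E[|J_X|]$ only produces a bound in terms of $\|J\|_1$. The resolution is the sharp inequality $|\tanh(x) - x| \leq |x|^p$ valid on the whole range $1 \leq p \leq 2$, which lets the centering trick produce a bound depending on the $p$-th moment alone, exactly what is needed to cover the interpolated regime.
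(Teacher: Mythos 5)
Your proof is correct and takes essentially the same route as the paper's own (which is only a two-sentence sketch): the elementary bounds $\ln\cosh(x)\le |x|^p$ and $|\tanh(x)-x|\le |x|^p$ for $1\le p\le 2$, combined with the centering trick and inequality (\ref{ineq:Independent}), give the $\Lambda$-uniform upper bound $\ln(2)+2\beta^p\|J\|_p^p$, while the truncation argument of Corollary \ref{cor:SG1stMoment} supplies the Contucci--Lebowitz super-additivity. The only blemish is the unjustified ``$0\le$'' in your truncation display --- the truncated pressure need not be a lower bound --- but it is also unnecessary, since the $\beta/|\Lambda|$-Lipschitz dependence of $p_\Lambda$ on each coupling (or applying Corollary \ref{cor:pressbound} in both directions) yields two-sided convergence $\E[p_\Lambda^R(\beta)]\to\E[p_\Lambda(\beta)]$ for each fixed $\Lambda$, which is all that the passage to the limit in the super-additivity inequality requires.
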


\begin{proof}
We may bound both $|\tanh(x) - x|$ and $\ln \cosh(x)$ by some constants times $\min(|x|,|x|^2)$.
Therefore, for any $1\leq p\leq 2$, we may bound these functions by some constant time $|x|^p$, and this
suffices to derive an upper bound on $p_{\Lambda}(\beta)$ in terms of $\|J\|_p$, which is uniform in $\Lambda$.
\end{proof}

\section*{Acknowledgments}
S.S.~was supported in part by a U.S.\ National Science Foundation
grant, DMS-0706927. P.C.~ was partially supported by the European Grant Cultaptation
and the Unibo Strategic Research plan. 
Part of this research was carried out while P.C.~and S.S.~visited the Weierstrass Institute for Applied Analysis and Stochastics (WIAS)
in Berlin, and while S.S.~visited the Erwin Schr\"odinger International Institute for Mathematical Physics (ESI)
in Vienna.
We thank them for their warm hospitality and thank Alessandra Bianchi and Anton Bovier for making the visit to WIAS possible.
We also thank Aernout van Enter for his helpful comments on an earlier preprint draft of this paper.

\end{document}